\newtheorem{lemma}{Lemma}
\newtheorem{proposition}{Proposition}
\newtheorem{remark}{Remark}
\begin{document}

\noindent \textbf{\large{Recessional velocities and Hubble's law in Schwarzschild-de Sitter space}}\\

\noindent  \textbf{\normalsize David Klein}\footnote{Department of Mathematics, California State University, Northridge, Northridge, CA 91330-8313. Email: david.klein@csun.edu.}
\textbf{\normalsize and Peter Collas}\footnote{Department of Physics and Astronomy, California State University, Northridge, Northridge, CA 91330-8268. Email: peter.collas@csun.edu.}\\

\vspace{4mm} \parbox{11cm}{\noindent{\small We consider a spacetime with empty Schwarzschild-de Sitter exterior and Schwarzschild-de Sitter interior metric for a spherical fluid with constant density.  The fluid interior may be taken to represent a galaxy supercluster, for which the proper distance from the center of the supercluster to the cosmological horizon has the same order of magnitude as the Hubble radius derived from Friedmann-Robertson-Walker cosmologies.  The fluid interior and surrounding vacuum may also be considered as a model of the Local Group of galaxies in the far future. Particle motion is subject both to the attractive gravity exerted by the fluid and the repelling cosmological constant.  Using  global Fermi coordinates for the central observer within the fluid, the Fermi velocity, the astrometric velocity, the kinematic velocity, and the spectroscopic velocity, relative to the central (Fermi) observer, of a radially receding test particle are calculated and compared.  We find that the Fermi relative velocity can exceed the speed of light in this model, but the presence of a positive cosmological constant causes recessional speeds of distant high energy particles to decrease rather than increase.  We derive a version of Hubble's law for this spacetime which might be applicable for the analysis of a receding mass within a great void adjacent to a supercluster, relatively isolated from gravitational sources other than the supercluster.  We also  compare some of our results to related behavior in FRW cosmologies and consider implications to arguments regarding the expansion of space.}\vspace{5mm}\\

\noindent {\small KEY WORDS: Fermi coordinates, Schwarzschild-de Sitter space, Fermi velocity, astrometric velocity, spectroscopic velocity, kinematic velocity, superluminal velocity, galaxy supercluster, Hubble's law, far future Local Group, expansion of space.}\\

\noindent PACS: 98.80.Jk, 04.20.-q, 98.56.-p, 98.65.Dx, }\\
\vspace{6cm}
\pagebreak

\setlength{\textwidth}{27pc}
\setlength{\textheight}{43pc}

\noindent \textbf{{\normalsize 1. Introduction}}\\

\noindent The line element for Schwarzschild-de Sitter spacetime with constant density interior is given by,

\begin{equation}\label{metric1}
ds^{2}=
\begin{cases}
 -A(r)d\tilde{t}^{\,2}+B(r)dr^{2}+r^{2}d\Omega^{2} &\text{if}\,\, r\leqslant R\\\\
 -\left(1-\frac{2M}{r}-\frac{\Lambda r^{2}}{3}\right)d\tilde{t}^{2}+\left(1-\frac{2M}{r}-\frac{\Lambda r^{2}}{3}\right)^{-1}dr^{2}+r^{2}d\Omega^{2} &\text{if}\,\,r\geqslant R,\\ 
\end{cases}
\end{equation}

\noindent where $M$ is the mass of the spherical fluid, $\Lambda$ is the cosmological constant, $R$ is the radial coordinate for the radius of the fluid, $d\Omega^{2}=d\theta^{2}+\sin^{2}\theta d\phi^{2}$ and,

\begin{align}
\begin{split}\label{IS2}
A(r)=&\left[\frac{(3-R_{0}^{2}\Lambda)}{2}\sqrt{1-\frac{R^{2}}{R_{0}^{2}}}-\frac{(1-R_{0}^{2}\Lambda)}{2}\sqrt{1-\frac{r^{2}}{R_{0}^{2}}}\right]^{2},\\
B(r)=&\left(1-\frac{r^{2}}{R_{0}^{2}}\right)^{-1}.
\end{split}
\end{align}

\noindent Here,

\begin{equation}
R_{0}^{2}=\frac{3R^{3}}{6M+\Lambda R^{3}}.
\end{equation}

\noindent The metric given by Eq.\eqref{metric1} satisfies the  Israel-Darmois junction conditions and the Einstein field equations (see, e.g., \cite{zarro}) for positive, negative, and zero values of $\Lambda$, but we assume here that $\Lambda \geqslant0$.  We also assume that $A(0)>0$, and that $M,R,\Lambda$ satisfy the generalized Buchdahl inequalities given in \cite{ zarro, Boehmer, boehmer, Hiscock} and the references therein, and later, for given values of $M$ and $R$, we will assume an upper bound on $\Lambda$ (see Eq.\eqref{bounds} below.\footnote{A short calculation shows that if $A(0)>0$ and Eq.\eqref{bounds} holds, then $A(r)>0$ for all $r\in[0, R]$ so that the metric is well defined in the interior region.})\\

\noindent The exterior Schwarzschild-de Sitter metric was used in \cite{orbit, orbit2} to study effects of a positive cosmological constant on the dynamics of the solar system, and some earlier related approaches are summarized in \cite{prd}.  In the present paper, we analyze velocities and accelerations of radially receding distant test particles, relative to the observer at the center of the fluid.\\

\noindent Care is required for the study of relative velocities of non local objects in curved spacetime.  General relativity restricts speeds of test particles to be less than the speed of light, $c=1$, relative to an observer at the exact spacetime point of the test particle.  However, general relativity provides no \textit{a priori} definition of relative velocity, and hence no upper bounds of speeds, for test particles and observers at different spacetime points. Distant particles may have  superluminal \footnote{Here and throughout we define the velocity $v$ of a test particle relative to an observer at a different spacetime point to be \emph{superluminal} if the norm $||v||>1$.} or only sub light speeds, depending on the coordinate system used for the calculations, and on the definition of relative velocity used.  To avoid such ambiguities, we employ the coordinate independent, purely geometric definitions of Fermi, astrometric, kinematic, and spectroscopic relative velocities given in \cite{bolos}, and defined here briefly for the special case of radially receding particles in Schwarzschild-de Sitter space.\\

\noindent The four inequivalent definitions of relative velocity each have physical justifications so as to be regarded as velocities (c.f. \cite{bolos, bolos2}). They  depend on two different notions of simultaneity: ``light cone simultaneity'' and simultaneity as defined by Fermi coordinates of the central observer.  The Fermi and kinematic relative velocities can be described in terms of the latter, according to which events are simultaneous if they lie on the same space slice determined by Fermi coordinates. For a radially receding test particle in this model, the kinematic relative velocity is found by first parallel transporting the four velocity $U$ of the test particle along a radial spacelike geodesic (lying on a Fermi space slice) to a four velocity $U'$ in the tangent space of the central observer, whose four velocity is $u$.  The kinematic relative velocity $v_{\text{kin}}$ is then the unique vector orthogonal to $u$, in the tangent space of the observer, satisfying $U'=\gamma(u+v_{\text{kin}})$ for some scalar $\gamma$ (which is also uniquely determined).  The Fermi relative velocity, $v_{\text{Fermi}}$, under the circumstances considered here, is the rate of change of proper distance of the test particle away from the Fermi observer, with respect to proper time of the observer.\\

\noindent The spectroscopic (or barycentric) and astrometric relative velocities can be derived from spectroscopic and astronomical observations.  Mathematically, both rely on the notion of light cone simultaneity, according to which two events are simultaneous if they both lie on the same past light cone of the central observer. The spectroscopic relative velocity $v_{\text{spec}}$ is calculated analogously to $v_{\text{kin}}$, described in the preceding paragraph, except that the four velocity $U$ of the test particle is parallel transported to the tangent space of the observer along a null geodesic lying on the past light cone of the observer, instead of along the Fermi space slice.  The astrometric relative velocity, $v_{\text{ast}}$, is calculated analogously to $v_{\text{Fermi}}$, as the rate of change of the \textit{observed} proper distance (through light signals at the time of observation) with respect to the proper time of the observer, as may be done via parallax measurements. The observer uses current time measurements together with proper distances of the test particle at the time of emission of light signals, or affine distance.  Details and elaboration may be found in \cite{bolos, bolos2}.\\  

\noindent Analysis of the Fermi relative velocity in Schwarzschild-de Sitter space allows comparisons with the behavior of receding test particles in Friedmann-Robertson-Walker (FRW) cosmologies, where Fermi velocity is (implicitly) used (see, e.g., \cite{confusion, gron}).  We show that the Fermi relative velocity of receding test particles can exceed the speed of light, but together with the astrometric velocity, decreases to zero at the cosmological horizon.  By contrast, the spectroscopic and kinematic relative velocities, which by their definitions cannot exceed the speed of light, reach the speed of light asymptotically at the cosmological horizon.  This property (together with others) of the kinematic velocity makes it a natural choice for the formulation of a version of Hubble's law in this spacetime, a topic developed below.  All relative velocities are calculated with respect to the static observer at $r=0$, who follows a timelike geodesic.\\ 

\noindent In Sec. 2 we express the metric of Eq.\eqref{metric1} using a polar version of Fermi coordinates for the $r=0$ observer.  These Fermi coordinates are global and are convenient for subsequent calculations.  We show that superluminal Fermi relative speeds occur along portions of timelike geodesics at sufficiently high energies and at  large proper distances away from the Fermi observer at $r=0$, even in the Schwarzschild case where $\Lambda=0$. Bounds on the maximum relative Fermi velocities for positive and for zero cosmological constant are also given. We identify a spherical region with radial coordinate $r_{0}$ (at any fixed time) within which test particles initially at rest (at $r < r_{0}$) fall toward the central observer at $r=0$, and outside of which (at $r > r_{0}$) they are accelerated (in Fermi coordinates) in the opposite direction on account of the cosmological constant.  We define the energy, $E_{0}$, of a unit mass test particle at rest at in the spherical region with $r=r_{0}$ to be the \textit{critical energy} of the spacetime; it plays a role in formulating a Hubble's law for Schwarzschild-de Sitter space in Sec. 7.\\ 

\noindent In contrast to the behavior of low energy particles, we also show in Sec. 2 that test particles with high enough energies, following radial geodesics receding from the fluid center at $r=0$, exhibit somewhat counterintuitive behavior.  For such a particle the outgoing Fermi velocity increases in the region $r<r_{0}$ and decreases in the region $r>r_{0}$.  That is, at sufficiently a high energy, the particle, in a certain sense, is ``pushed away'' from the central fluid in the region of space where gravity dominates, and is ``pushed back'' toward the central fluid in the region of space where lower energy particles accelerate \textit{away} from the central fluid due to the influence of the cosmological constant.  A comparison with analogous behavior in FRW cosmologies, identified for example in \cite{gron}, is considered in the concluding section.\\  

\noindent  Sects. 3-5 give formulas for corresponding kinematic, spectroscopic, and astrometric relative velocities of radially receding test particles according to the geometric definitions of \cite{bolos}.  Sec. 6 exhibits functional relationships of the relative velocities, employed in the following section.\\  

\noindent Sec. 7 is devoted to the development of a version of Hubble's law for Schwarzschild-de Sitter space (with strictly positive cosmological constant).  For this purpose, test particles with critical energy $E_{0}$ provide the natural context since in that case the motion of distant particles is due solely to the influence of the cosmological constant. Particles with higher energies may be regarded as having ``peculiar velocities,'' in analogy with FRW models. We derive a linear approximation of the $v_{\text{kin}}$ as a function of proper distance to identify a Hubble's constant in this context.  We then express the redshift of a light signal from a receding particle, relative to the redshift of a static particle at radial coordinate $r_{0}$, in terms of the observed, or affine distance, of the emitting test particle.\\

\noindent In Sec. 8, we consider the spherical fluid as a model for a larger structure, such as a galaxy supercluster. To that end we include numerical results for which the mass of the fluid is $M=10^{3}\,ly$ ($\approx 6\times 10^{15} M_{\odot}$);  $R = 10^{7}\,ly$ ($\approx 3\,Mpc$); and $\Lambda =3\times10^{-20}\; ly^{-2}$.  These choices of parameters are of the same order of magnitude calculated to hold for some galaxy superclusters \cite{Wray, Boerner}.  Moreover, with these parameters, the proper distance, in our model, from the Fermi observer at the center of the fluid to the cosmological horizon is of order $10^{10}$ light years, the same order of magnitude as estimates for the present Hubble length. Included is a discussion of the use of measurements to determine relative velocities and the basic parameters of this model. We also discuss Schwarzschild-de Sitter space as a model of the Local Group of galaxies in the far future.\\

\noindent Concluding remarks and a comparison with recessional velocities in FRW cosmologies, together the implication of our results on the question of the expansion of space, are given in Sec. 9.\\

\noindent \textbf{ 2. Global Fermi coordinates and Fermi relative velocity}\\

\noindent Let $\rho = \rho(r)$ be the proper distance, according to Eq.\eqref{metric1}, from the center of the fluid at $r=0$ to a point with radial coordinate $r$, i.e.,

\begin{equation}\label{proper}
\rho(r)=
\begin{cases}
R_{0}\sin^{-1} (r/R_{0}) &\text{if}\,\, r\leqslant R\\\\
 \int_{R}^{r} \frac{d\tilde{r}}{\sqrt{1-\frac{2M}{\tilde{r}}-\frac{\Lambda \tilde{r}^{2}}{3}}} +R_{0}\sin^{-1} (R/R_{0})&\text{if}\,\,r\geqslant R\,.\\ 
\end{cases}
\end{equation}\\

\noindent In Eq.\eqref{metric1}, we make the change of variable, $t = \sqrt{A(0)}\,\tilde{t}$ and $\rho = \rho(r)$, with the angular coordinates left unchanged.  Denoting the inverse function of $\rho(r)$ by $r(\rho)$, the result is,

\begin{equation}\label{metric2}
ds^{2}=
\begin{cases}
 -\frac{A(r(\rho))}{A(0)}dt^{2}+d\rho^{2}+R^{2}_{0}\sin^{2}(\rho/R_{0})d\Omega^{2} &\text{if}\,\, \rho\leqslant R_{0}\sin^{-1} (R/R_{0})\\\\
 -\left(1-\frac{2M}{r(\rho)}-\frac{\Lambda r(\rho)^{2}}{3}\right)\frac{dt^{2}}{A(0)}+d\rho^{2}+r(\rho)^{2}d\Omega^{2} &\text{if}\,\,\rho\geqslant R_{0}\sin^{-1} (R/R_{0}).\\
\end{cases}
\end{equation}\\

\noindent Note that $\rho(r)$ and all of the metric coefficients in Eq.\eqref{metric2}, in contrast to Eq.\eqref{metric1}, are continuously differentiable, including at the junction, $r(\rho)=R$.  Following standard notation and for later reference, we identify $g_{tt}=g_{tt}(\rho)$ as the metric coefficient of $dt^{2}$ in Eq.\eqref{metric2}, a function of $\rho$ alone, i.e.,

\begin{equation}\label{gtt}
g_{tt}(\rho)=
\begin{cases}
 -A(r(\rho))/A(0) &\text{if}\,\, \rho\leqslant   R_{0}\sin^{-1} (R/R_{0})\\\\
 -\left(1-\frac{2M}{r(\rho)}-\frac{\Lambda r(\rho)^{2}}{3}\right)/A(0) &\text{if}\,\,\rho\geqslant  R_{0}\sin^{-1} (R/R_{0}).\\
\end{cases}
\end{equation}

\noindent It is straightforward to show that the radial spacelike geodesics, orthogonal to the static observer's worldline at $\rho=0$, are of the form,

\begin{equation}\label{geodesic2}
Y(\rho)=(t_{0}, \rho, \theta_{0},\phi_{0}),
\end{equation}

\noindent for any fixed values of $t_{0}, \theta_{0},\phi_{0}$.  With the further change of spatial coordinates, $x^{1}=\rho \sin\theta \cos\phi$, $x^{2}=\rho \sin\theta \sin\phi$, $x^{3}=\rho \cos\theta$, the metric of Eq.\eqref{metric2} is expressed in Fermi coordinates for the static observer at the center of the fluid.  This was proved in \cite{KC3} for the interior part of the metric, and it holds for the metric on the larger spacetime (with the vacuum exterior) considered here.  One may verify that with the above change of variables, the spacelike path below is geodesic and orthogonal to the timelike path of the $\rho=0$ static observer:

\begin{equation}
\begin{split}\label{geodesic3}
Y(\rho)=&(t_{0}, \rho \sin\theta_{0} \cos\phi_{0}, \rho \sin\theta_{0} \sin\phi_{0}, \rho \cos\theta_{0})\\
\equiv&(t_{0}, a^{1}\rho, a^{2}\rho,a^{3}\rho), 
\end{split}
\end{equation}

\noindent for any $t_{0}, \theta_{0}, \phi_{0}$.  The requirement that orthogonal spacelike geodesics have the form of Eq.\eqref{geodesic3} characterizes Fermi coordinates (for background, see \cite{KC1, KC3}). Eq.\eqref{metric2} may thus be regarded as the polar form of the metric in Fermi coordinates.\\

\begin{remark}\label{anti} Replacing $\sin a\rho$ in Eq.\eqref{metric2} by  $\sinh a\rho$ results in the metric for anti-de Sitter space with imbedded constant density fluid expressed in (polar) Fermi coordinates.
\end{remark}

\noindent The following fact, expressed in the form of a lemma, will aid in the physical interpretation of results that follow.

\begin{lemma}\label{A(0)} If $\Lambda\geqslant0$, then $A(0)<1$.
\end{lemma}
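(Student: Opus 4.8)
The plan is to recast $A(0)$ into a form in which the bound is manifest. The first step is to eliminate $R_0$ in favor of the dimensionless quantities
$x := R^{2}/R_{0}^{2}$ and $y := R_{0}^{2}\Lambda$. Using the definition of $R_{0}^{2}$ one checks the identity $x = 2M/R + \Lambda R^{2}/3$, so that $x>0$ because $M>0$, while $x<1$ is exactly the requirement that $B(r)=(1-r^{2}/R_{0}^{2})^{-1}$ in Eq.~\eqref{IS2} be positive and finite on $[0,R]$ (equivalently, that $R$ lie in the static region between the black-hole and cosmological horizons), which is part of the standing assumptions. The same identity rearranges to $2M/R = x(3-y)/3$, and since the left side is positive and $x>0$, this forces $y<3$; combined with $\Lambda\geqslant 0$ one gets $0\leqslant y<3$, hence $\alpha := (3-R_{0}^{2}\Lambda)/2 \in (0,3/2]$.

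The second step is the algebraic observation that makes Eq.~\eqref{IS2} telescope at $r=0$. Since $\sqrt{1-r^{2}/R_{0}^{2}}=1$ at $r=0$ and the two coefficients satisfy $(3-y)/2-(1-y)/2=1$, we obtain
\begin{equation*}
A(0)=\left[\frac{3-y}{2}\sqrt{1-x}-\frac{1-y}{2}\right]^{2}
     =\left[1-\frac{3-y}{2}\bigl(1-\sqrt{1-x}\,\bigr)\right]^{2}.
\end{equation*}
Writing $s := 1-\sqrt{1-x}$, the ranges $x\in(0,1)$ and $\alpha\in(0,3/2]$ give $s\in(0,1)$ and $\alpha s\in(0,3/2)\subset(0,2)$, so $1-\alpha s\in(-1/2,1)\subset(-1,1)$, and squaring yields $A(0)<1$.

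I do not expect a genuine obstacle here; the content is entirely in (i) extracting $0<x<1$ and $0\leqslant y<3$ from the physical hypotheses on $M,R,\Lambda$, and (ii) noticing the coincidence $(3-y)/2-(1-y)/2=1$ that collapses the bracket. Once those are in place, no estimate beyond $\alpha\leqslant 3/2$ and $s<1$ is needed. It is worth flagging the boundary behavior for interpretation: $A(0)\uparrow 1$ as $x\downarrow 0$ (i.e. $M\downarrow 0$), which is why strictness of the inequality relies on $M>0$, while $A(0)=0$ can occur, consistent with the separately imposed assumption $A(0)>0$.
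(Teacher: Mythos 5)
Your proof is correct and is essentially the paper's own argument in different clothing: both reduce to showing that the bracketed quantity in Eq.~\eqref{IS2} at $r=0$ lies strictly between $-1/2$ and $1$, using $M>0$ (equivalently $R_{0}^{2}\Lambda<3$), $\Lambda\geqslant 0$, and $0<R^{2}/R_{0}^{2}<1$. The substitutions $x$, $y$, $s$ and the rewriting as $1-\alpha s$ are a reparametrization of the paper's direct subtraction step, not a different method.
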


\begin{proof} Observe that,

\begin{equation}\label{inequality}
0<\frac{(3-R_{0}^{2}\Lambda)}{2}\sqrt{1-\frac{R^{2}}{R_{0}^{2}}}<\frac{(3-R_{0}^{2}\Lambda)}{2},
\end{equation}
 
\noindent where the first inequality follows from $M>0$.  Subtracting $(1-R_{0}^{2}\Lambda)/2$ yields,

\begin{equation}
-\frac{1}{2}\leqslant-\frac{(1-R_{0}^{2}\Lambda)}{2}<\frac{(3-R_{0}^{2}\Lambda)}{2}\sqrt{1-\frac{R^{2}}{R_{0}^{2}}}-\frac{(1-R_{0}^{2}\Lambda)}{2}<1,
\end{equation}

\noindent from which the result follows. 
\end{proof}

\noindent From Eq.\eqref{metric2}, the Lagrangian for a radial, timelike geodesic is,

\begin{equation}\label{Lagrangian}
L=\frac{g_{tt}\dot{t}^{2}}{2}+\frac{\dot{\rho}^{2}}{2}=-\frac{1}{2},
\end{equation}

\noindent where the overdot signifies differentiation with respect to the proper time $\tau$ along the  geodesic. Since $\partial/\partial t$ is a Killing vector, the energy $E=-p_{t}$ of a unit test particle is invariant along the geodesic, and is given by,

\begin{equation}\label{energy}
p_{t}=g_{tt}\dot{t}=-E.
\end{equation}

\noindent It follows directly from Eqs. \eqref{Lagrangian} and \eqref{energy} that,

\begin{equation}\label{velocity0}
\|v_{\text{Fermi}}\|^{2}=\left(\frac{d\rho}{dt}\right)^{2}
=\frac{\dot{\rho}^{2}}{\dot{t}^{2}}
=-g_{tt}(\rho)\left[1+ \frac{g_{tt}(\rho)}{E^{2}}\right],
\end{equation}

\noindent where we have used Proposition 3 of \cite{bolos} to identify $|d\rho/dt| = \|v_{\text{Fermi}}\|$, the norm of the (geometrically defined) Fermi velocity.    From Eq.\eqref{velocity0} we see that the energy, $E$, of a radial geodesic, passing through a point at proper distance $\rho$ from the central observer, must satisfy,

\begin{equation}\label{restriction}
-g_{tt}(\rho)\leqslant E^{2}.
\end{equation}

\noindent Restricting Eq.\eqref{velocity0} to the exterior region gives,

\begin{equation}\label{velocity}
\|v_{\text{Fermi}}\|^{2}=\left(\frac{d\rho}{dt}\right)^{2}=\frac{\left(1-\frac{2M}{r(\rho)}-\frac{\Lambda r(\rho)^{2}}{3}\right)}{A(0)}\left[ 1- \frac{\left(1-\frac{2M}{r(\rho)}-\frac{\Lambda r(\rho)^{2}}{3}\right)}{A(0)E^{2}}\right].
\end{equation}

\noindent Differentiating Eq.\eqref{velocity} with respect to $t$ gives,

\begin{equation}\label{acceleration}
\frac{d^{2}\rho}{dt^{2}}=\frac{\left(\frac{M}{r(\rho)^{2}}-\frac{\Lambda r(\rho)}{3}\right)}{A(0)}\left[ 1- \frac{2\left(1-\frac{2M}{r(\rho)}-\frac{\Lambda r(\rho)^{2}}{3}\right)}{A(0)E^{2}}\right]r'(\rho),
\end{equation}

\noindent where $r'(\rho) = \sqrt{1-\frac{2M}{r(\rho)}-\frac{\Lambda r(\rho)^{2}}{3}}$ follows from Eq.\eqref{proper}.  The acceleration according to the Fermi coordinates of the central observer therefore vanishes at up to three values of $\rho$: (\textit{a}) at the cosmological horizon (where  $r'(\rho) =0$); (\textit{b}) if,

\begin{equation}\label{acceleration2}
1-\frac{2M}{r(\rho)}-\frac{\Lambda r(\rho)^{2}}{3}= \frac{A(0)E^{2}}{2},
\end{equation}

\noindent and, assuming $\Lambda>0$, (\textit{c}) at,

\begin{equation}\label{r0}
r(\rho_{0})=r_{0}\equiv \left(\frac{3M}{\Lambda}\right)^{1/3}.
\end{equation}

\noindent Henceforth, we assume, 

\begin{equation}\label{bounds}
 0\leqslant\Lambda < \min\left(\frac{1}{9M^{2}}, \frac{3M}{R^{3}}\right). 
\end{equation}

\noindent Inequality \eqref{bounds} guarantees that $r_{0}>R$ and $1-\frac{2M}{r_{0}}-\frac{\Lambda r_{0}^{2}}{3}>0$  so that $r_{0}$ lies in the exterior vacuum somewhere between the boundary of the fluid and the cosmological horizon of the Fermi observer. This natural condition is fulfilled by our examples.\\

\noindent The number $r_{0}$ is the radial coordinate where gravitational attraction is exactly balanced by repulsion from the cosmological constant.  To elaborate on this point, we define the \textit{critical energy}, $E_{0}$, by,

\begin{equation}\label{rest}
E_{0}^{2}A(0) = 1-\frac{2M}{r_{0}}-\frac{\Lambda r_{0}^{2}}{3}.
\end{equation}

 \noindent It is easily checked that a particle with energy $E_{0}$ at radial coordinate $r_{0}$ has zero Fermi velocity and zero acceleration, and remains at rest.  The gravitational acceleration inward is exactly balanced by the acceleration outward due to the cosmological constant.  A particle initially at rest at a point closer to the the central observer (with initial coordinates satisfying $r(\rho)< r_{0}$) will accelerate toward the central observer, while a particle initially at rest with radial coordinate larger than $r_{0}$ will accelerate away from the central observer, in Fermi coordinates.  We note that in the standard weak field approximation for the Newtonian potential energy function via $1+2V/c^{2} = -g_{tt}$ (where $c$ is the speed of light),
 
\begin{equation}\label{potential}
V(r) = -\frac{GM}{r}-\frac{\Lambda c^{2} r^{2}}{6},
\end{equation}
 
\noindent so that the force $F$ is given by,

\begin{equation}\label{force}
F(r)=- \nabla V(r) = -\frac{GM}{r^{2}}+\frac{\Lambda c^{2} r}{3}.
\end{equation}

\noindent Setting $F(r) = 0$ yields the same expression for $r_{0}$ as in Eq.\eqref{r0}, though in the relativistic case the proper distance from the central observer is $\rho(r_{0})$ as given by Eq.\eqref{proper}.\\

\begin{figure}[!h]
\begin{center}
\resizebox{!}{5.5 cm}{
$$\includegraphics{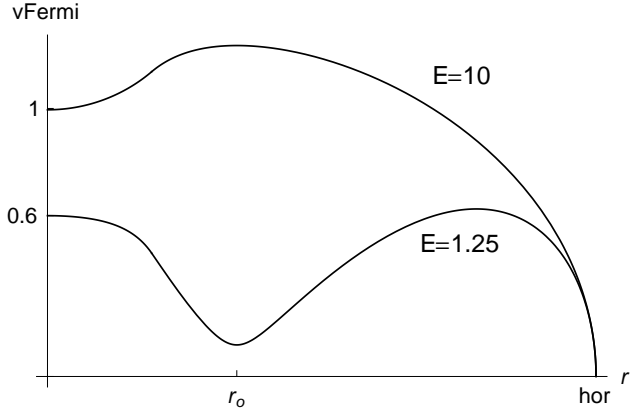}$$}
\caption{Low and high energy $v_{\text{Fermi}}$ for  $M=20,\, R=100,\, \Lambda = 10^{-5}$. Here, $r_{0}\approx181.7$, $E_{0}\approx1.24$.}\label{fig1}
\end{center}
\end{figure}

\noindent A particle with energy $E >E_{0}$ satisfies Eq.\eqref{restriction} in the entire vacuum region of the spacetime.  From Eq.\eqref{acceleration} it follows that if $E_{0}<E <\sqrt{2}\,E_{0}$, the test particle decelerates before it reaches a distance with radial coordinate $r_{0}$, and as soon as it passes that point, it begins to accelerate away from the fluid toward the cosmological horizon. This acceleration toward the horizon continues until the factor in the square brackets on the right side of Eq.\eqref{acceleration} reaches zero, at which point the particle decelerates. However, if $E>\sqrt{2}\,E_{0}$ the opposite occurs: the particle accelerates before it reaches a distance with radial coordinate $r_{0}$, and thereafter decelerates.  In both scenarios, the particle's relative Fermi velocity decreases to zero at the cosmological horizon.  The effect of the cosmological constant is strikingly different in these two cases.  Fig. \ref{fig1} illustrates these general features  for particular (though artificial) choices for the parameters.  Note that the initial velocity of the high energy particle ($E=10$) is slightly below the speed of light. In the case that $\Lambda=0$, it is not difficult to verify that for high energy unit mass particles, with $A(0)E^{2} \geqslant 2$, the outward acceleration given by Eq.\eqref{acceleration} is positive throughout the exterior vacuum. Thus, the negative acceleration of the high energy particle for $r>r_{0}$ in Fig.\ref{fig1} is due to a positive cosmological constant (and is not merely a property of Fermi coordinates).\\

\noindent We conclude this section with some observations in the form of two propositions.\\
 
\begin{proposition}\label{superluminal}  Assume that $\Lambda >0$. As above, let $r_{0}\equiv (3M/\Lambda)^{1/3}$ and let $v_{\text{Fermi}}$ denote the Fermi velocity, relative to the central observer, of a test particle receding radially along a timelike geodesic in the exterior vacuum of Schwarzschild-de Sitter spacetime. Then\\ 
 
\noindent (a) For any energy $E$ of the test particle, $\|v_{\text{Fermi}}\| < E_{0}$ along its geodesic in the exterior vacuum.\\
 
\noindent (b) The maximum value of $\|v_{\text{Fermi}}\|$ as a function of $\rho$ exceeds the speed of light for sufficiently high energy $E$ if and only if $E_{0}>1$, i.e., $A(0)<1-\frac{2M}{r_{0}}-\frac{\Lambda r_{0}^{2}}{3}.$
 \end{proposition}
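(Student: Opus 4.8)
The plan is to reduce the whole statement to an elementary analysis of $x\equiv -g_{tt}(\rho)=\bigl(1-\tfrac{2M}{r(\rho)}-\tfrac{\Lambda r(\rho)^{2}}{3}\bigr)/A(0)$. By Eq.~\eqref{velocity0}, along any radial timelike geodesic of energy $E$ the squared Fermi speed is $\|v_{\text{Fermi}}\|^{2}=x\,(1-x/E^{2})$, so everything comes down to the range of $x$ over the exterior vacuum. I would first establish that this range has maximum $E_{0}^{2}$: writing $f(r)=1-\tfrac{2M}{r}-\tfrac{\Lambda r^{2}}{3}$, the derivative $f'(r)=\tfrac{2M}{r^{2}}-\tfrac{2\Lambda r}{3}$ is positive for $r<r_{0}$ and negative for $r>r_{0}$ (with $r_{0}$ as in Eq.~\eqref{r0}), so $f$ has a unique maximum on $(0,\infty)$ at $r_{0}$; hence $x$, as a function of $\rho$, attains its maximum over the exterior at the proper distance $\rho(r_{0})$ with value $x_{\max}=f(r_{0})/A(0)=E_{0}^{2}$ by Eq.~\eqref{rest}. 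Here the hypothesis~\eqref{bounds} is used precisely to guarantee that $r_{0}$ lies strictly between $R$ and the cosmological horizon, so that $\rho(r_{0})$ is an interior point of the exterior region.

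For part~(a), I would then use the one-line estimate: for any $x\in[0,E_{0}^{2}]$ and any $E>0$,
\[
E_{0}^{2}-\|v_{\text{Fermi}}\|^{2}=(E_{0}^{2}-x)+\frac{x^{2}}{E^{2}}\geqslant 0,
\]
and equality would force $x=E_{0}^{2}$ and $x=0$ simultaneously, which is impossible since $E_{0}^{2}A(0)=f(r_{0})>0$ by~\eqref{bounds} and $A(0)>0$. Thus $\|v_{\text{Fermi}}\|<E_{0}$ at every point of the geodesic in the exterior vacuum.

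For part~(b), the ``only if'' direction is immediate from (a): if $E_{0}\leqslant1$ then $\|v_{\text{Fermi}}\|<E_{0}\leqslant1$ for every energy, so the maximum over $\rho$ never reaches the speed of light. For the converse I would assume $E_{0}>1$, take any $E>E_{0}$, and evaluate at $r=r_{0}$: there $x=E_{0}^{2}\leqslant E^{2}$, so the restriction~\eqref{restriction} holds and $\dot\rho^{2}=E^{2}/x-1=E^{2}/E_{0}^{2}-1>0$, whence a radially receding timelike geodesic of energy $E$ passes through $\rho(r_{0})$ and, by Eq.~\eqref{velocity}, has $\|v_{\text{Fermi}}\|^{2}=E_{0}^{2}\bigl(1-E_{0}^{2}/E^{2}\bigr)$ there, which tends to $E_{0}^{2}>1$ as $E\to\infty$. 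Hence for all sufficiently large $E$ the Fermi speed exceeds $1$ already at $\rho=\rho(r_{0})$, so its maximum over $\rho$ does too. The reformulation follows by multiplying $E_{0}^{2}>1$ by $A(0)>0$ and using Eq.~\eqref{rest}, giving $A(0)<1-\tfrac{2M}{r_{0}}-\tfrac{\Lambda r_{0}^{2}}{3}$, which is consistent with Lemma~\ref{A(0)}.

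I do not expect a real obstacle; the computation is short. The one point requiring care is the claim that the maximum of $-g_{tt}$ over the exterior equals $E_{0}^{2}$ and is attained at $r_{0}$ --- this needs both the monotonicity analysis of $f$ and the consequence of~\eqref{bounds} that $r_{0}$ is strictly between $R$ and the horizon, so it is an interior rather than a boundary point. A secondary check in part~(b) is that an actual radial timelike geodesic of the chosen energy reaches $r_{0}$, which is why one verifies $\dot\rho^{2}>0$ there.
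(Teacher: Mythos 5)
Your proposal is correct and follows essentially the same route as the paper: part (a) via the fact that $1-\tfrac{2M}{r}-\tfrac{\Lambda r^{2}}{3}$ attains its maximum $E_{0}^{2}A(0)$ at $r_{0}$, and part (b) by deducing necessity from (a) and sufficiency from the limit $E\to\infty$ of $\|v_{\text{Fermi}}\|$ evaluated at $r(\rho)=r_{0}$. Your added details (the strictness argument in (a) and the check that $\dot{\rho}^{2}>0$ at $r_{0}$) merely fill in steps the paper leaves as ``easily verified.''
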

 
\begin{proof} Part (a) follows from Eq.\eqref{velocity} and the easily verified fact that the function, $1-\frac{2M}{r(\rho)}-\frac{\Lambda r(\rho)^{2}}{3}$, achieves its maximum value at $r(\rho)= r_{0}$.  It then follows from part (a) that $A(0)<1-\frac{2M}{r_{0}}-\frac{\Lambda r_{0}^{2}}{3}$ is a necessary condition for $\|v_{\text{Fermi}}\|$ to exceed the speed of light at some point on the radial geodesic. Sufficiency follows by taking a limit of $\|v_{\text{Fermi}}\|$ evaluated at $r(\rho)=r_{0}$ using Eq.\eqref{velocity}

\begin{equation}\label{limit}
\lim_{E\rightarrow\infty}\|v_{\text{Fermi}}\|=\sqrt{\frac{1-\frac{2M}{r_{0}}-\frac{\Lambda r_{0}^{2}}{3}}{A(0)}}=E_{0}.
\end{equation}

\end{proof}

\begin{proposition}\label{superluminal2} Let $\Lambda =0$ and assume that $M, R$ satisfy the Buchdahl inequality, $M/R<4/9$. A  test particle in the exterior vacuum of Schwarzschild spacetime, receding radially along a timelike geodesic will achieve a Fermi velocity in excess of the speed of light, relative to the Fermi observer at the fluid center,  for all sufficiently high energies and sufficiently large proper distances from the fluid center.  The Fermi relative speed, $\|v_{\text{Fermi}}\|$, is bounded above by $\sqrt{1/A(0)}$.
\end{proposition}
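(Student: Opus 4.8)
The plan is to argue directly from Eq.~\eqref{velocity} specialized to $\Lambda=0$, using Lemma~\ref{A(0)} in the essential way: since $A(0)<1$ we have $\sqrt{1/A(0)}>1$ and $1-A(0)>0$. First I would record the structural facts. The Buchdahl inequality $M/R<4/9$ forces $R>2M$, so the factor $\phi(\rho):=1-2M/r(\rho)$ is strictly positive, smooth, and strictly increasing on the entire exterior vacuum, with $\phi(\rho)\to1$ as $r(\rho)\to\infty$; and because the integrand in Eq.~\eqref{proper} tends to $1$, the proper distance $\rho\to\infty$ as $r\to\infty$. Thus ``sufficiently large proper distance'' here refers to an unbounded range, in contrast to the $\Lambda>0$ case, where the cosmological horizon lies at finite proper distance.

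For the upper bound, I would rewrite Eq.~\eqref{velocity} with $\Lambda=0$ as $\|v_{\text{Fermi}}\|^{2}=\dfrac{\phi}{A(0)}\Bigl(1-\dfrac{\phi}{A(0)E^{2}}\Bigr)$. By the restriction~\eqref{restriction}, the bracketed factor lies in $[0,1)$, so $\|v_{\text{Fermi}}\|^{2}<\phi/A(0)<1/A(0)$, which is the asserted bound (and it is sharp, being the supremum of $\|v_{\text{Fermi}}\|^{2}$ as $E\to\infty$ and $\rho\to\infty$).

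For the superluminality claim, fix any energy with $E^{2}>\dfrac{1}{A(0)\bigl(1-A(0)\bigr)}$. Since this exceeds $1/A(0)$, the Lagrangian constraint gives $\dot{\rho}^{2}=E^{2}/|g_{tt}|-1=E^{2}A(0)/\phi-1>0$ throughout the exterior (as $\phi<1$), so an outgoing radial timelike geodesic never turns around and $\rho\to\infty$ along it; alternatively one may invoke the monotone growth of $\|v_{\text{Fermi}}\|$ established in Sec.~2 for $A(0)E^{2}\geqslant2$. Letting $\rho\to\infty$ in the displayed formula yields $\|v_{\text{Fermi}}\|^{2}\to\dfrac{1}{A(0)}\Bigl(1-\dfrac{1}{A(0)E^{2}}\Bigr)$, and the chosen lower bound on $E$ is exactly the condition making this limit strictly greater than $1$. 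Hence there is a proper distance $\rho^{*}=\rho^{*}(E)$ beyond which $\|v_{\text{Fermi}}\|>1$ on the geodesic, which is the claim.

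The only real subtlety, and the step I would be most careful about, is ensuring that the receding geodesic actually reaches the far region where the velocity formula forces superluminality, rather than decelerating and turning back; this is why I would include the explicit lower bound on $E$ (guaranteeing $\dot{\rho}^{2}>0$ everywhere, equivalently $E^{2}>\sup_{\rho}|g_{tt}|=1/A(0)$) rather than merely quoting the pointwise expression for $\|v_{\text{Fermi}}\|$. Everything else is elementary algebra on Eq.~\eqref{velocity} combined with $A(0)<1$.
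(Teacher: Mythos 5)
Your argument is correct and follows essentially the same route as the paper: specialize Eq.~\eqref{velocity} to $\Lambda=0$, bound $\|v_{\text{Fermi}}\|^{2}$ by $1/A(0)$, and let $\rho\to\infty$ to obtain the limiting value $\frac{1}{A(0)}\left[1-\frac{1}{A(0)E^{2}}\right]$, which exceeds $1$ for sufficiently large $E$ because $A(0)<1$ by Lemma~\ref{A(0)}. Your explicit threshold $E^{2}>1/\bigl(A(0)(1-A(0))\bigr)$ and the verification that $\dot{\rho}^{2}>0$ throughout the exterior (so the receding geodesic really attains arbitrarily large proper distance rather than turning back) sharpen a point the paper leaves implicit, but they do not change the method.
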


\begin{proof} $\|v_{\text{Fermi}}\|<\sqrt{1/A(0)}$ follows directly from Eq.\eqref{velocity} in the case that $\Lambda=0$.  It also follows from Eq.\eqref{velocity} that,

\begin{equation}\label{velocityschwarz}
\lim_{\rho\rightarrow\infty}\|v_{\text{Fermi}}\|=\sqrt{\frac{1}{A(0)}\left[ 1- \frac{1}{A(0)E^{2}}\right]}>1,
\end{equation}

\noindent for $E$ sufficiently large, since by Lemma \ref{A(0)}, $A(0)$ is necessarily less than $1$.  

\end{proof}

\noindent Thus, even in Schwarzschild spacetime the Fermi relative velocity of a radially receding particle, far from the central observer,  can exceed the speed of light. We discuss the significance of this in the concluding section.

\begin{remark}\label{light}
The speed of a radially receding distant photon, with respect to proper time and proper distance of the central observer in the fluid, i.e., in Fermi coordinates, may be computed by setting the right side of Eq.\eqref{Lagrangian} equal to zero.  The speed of the photon in Fermi coordinates is thus $\sqrt{-g_{tt}}$, which is an upper bound and limiting value for the Fermi speed of a massive particle, given by Eq.\eqref{velocity0}, at the same spacetime position, as must be the case. The maximum possible relative Fermi speed of a distant photon is therefore the critical energy (per unit momentum) $E_{0}$.
\end{remark}

\noindent \textbf{3. Kinematic relative velocity}\\

\noindent  The four-velocity of the central observer at $\rho=0$ is $u=(1,0,0,0)$.  Let $U=U(\rho)$ denote the four-velocity along a timelike radial geodesic of a radially receding test particle at a proper distance $\rho$ from the central observer.  Without loss of generality, we assume $\theta=\phi=0$. It follows from Eqs.\eqref{Lagrangian} and \eqref{energy} that,

\begin{align}\label{vk2b}
U(\rho)&=\left(-\frac{E}{g_{tt}(\rho)}\,,\,\sqrt{-\frac{E^{2}}{g_{tt}(\rho)}-1}\,,\,0\,,\,0\right).
\end{align}

\noindent We assume that  $E >E_{0}$ so that Eq.\eqref{restriction} holds and Eq.\eqref{vk2b} is well-defined throughout the exterior (vacuum) region.\\

\noindent The kinematic relative velocity $v_{\text{kin}}$ of $U$ with respect to the central observer's four-velocity $u$ is given by (see \cite{bolos}),

\begin{equation}
\label{vk1}
v_{\text{kin}}=\frac{1}{-g\left(\tau_{\rho0}U,u\right)}\tau_{\rho0}U-u\,,
\end{equation}

\noindent where $g$ is the bilinear form defined by Eq.\eqref{metric2} and $\tau_{\rho0}U$ is the parallel transport of $U$ from a proper distance $\rho$ to the fluid center, $\rho =0$, along the spacelike radial geodesic  with tangent vector $X=(0,1,0,0)$, connecting these two points.  It follows from its definition that the kinematic speed, i.e., the norm of the kinematic velocity, cannot exceed the speed of light.\\

\noindent Since the affine coefficients $\Gamma^{\rho}_{\;t\rho}=\Gamma^{\rho}_{\;\rho\rho}\equiv0$ for the metric of Eq.\eqref{metric2}, it is easily verified that the parallel transport of the $\rho$-component, $U^{\rho}$, of $U$ is constant along spacelike radial geodesics.  Thus $(\tau_{\rho0}U)^{\rho}=U^{\rho} (\rho)$. Also, it follows from symmetry and Eq.\eqref{vk2b} that the angular components of the parallel transport of $U$ are zero along the radial spacelike geodesic.  At the origin, $\rho=0$, Eq.\eqref{metric2} becomes the Minkowski metric, so \footnote{Spherical polar coordinates are singular at $\rho=0$, but $(\tau_{\rho0}U)^{\rho}$ is meaningful as a limit as $\rho\rightarrow0$.  Alternatively, if standard (Cartesian) Fermi coordinates, via the coordinate transformation identified above Eq.\eqref{geodesic3}, are used, the radial direction may be identified as the  $x,y$ or $z$ axis in the usual Minkowski coordinates at the center of the fluid, in which case $(\tau_{\rho0}U)^{\rho}$ is well-defined.},

\begin{equation}\label{minkowski}
-\left((\tau_{\rho0}U)^{t}\right)^{2}+\left((\tau_{\rho0}U)^{\rho}\right)^{2}=-1
\end{equation}

\noindent Thus,

\begin{equation}
\label{vk4}
(\tau_{\rho0}U)^{t}=\sqrt{1+\left(U^{\rho}(\rho)\right)^{2}}=\frac{E}{\sqrt{-g_{tt}(\rho)}}\,.
\end{equation}

\noindent We then find,
\begin{equation}
\label{vk5}
\tau_{\rho0}U=\left(\frac{E}{\sqrt{-g_{tt}(\rho)}}\,,\,\sqrt{-\frac{E^{2}}{g_{tt}(\rho)}-1}\,,\,0\,,\,0\right)\,,
\end{equation}

\noindent and using Eq.\eqref{vk1}, we find that the kinematic speed as a function of $\rho$ is given by,

\begin{equation}
\label{vk6}
\|v_{\text{kin}}\|=\sqrt{1+\frac{g_{tt}(\rho)}{E^{2}}}\,.
\end{equation}

\noindent \textbf{4. Gravitational Doppler Shift and Spectroscopic Relative Velocity}\\

\noindent The gravitational Doppler shift of a test particle receding from the observer at $\rho=0$ is given by 
\begin{equation}\label{doppler}
\frac{\nu_{R}}{\nu_{E}}=\frac{p_{\mu}(R)u^{\mu}(R)}{p_{\mu}(E)u^{\mu}(E)},
\end{equation}

\noindent where ``$E$'' refers to emitter and ``$R$'' to receiver, so that $\nu_{E}, p_{\mu}, (E),u^{\mu}(E)$ represent respectively the frequency of an emitted photon from the receding test particle, the four-momentum of the emitted photon, and the four-velocity of the receding test particle, with analogous definitions for the remaining terms.  The four-momentum (as a four-vector) of a photon traveling toward the observer at $\rho =0$ is given by,
\begin{equation}\label{photon}
p=\left(\frac{p_{t}}{g_{tt}},\frac{p_{t}}{\sqrt{-g_{tt}}},0,0\right),
\end{equation}

\noindent where the energy, $-p_{t}$, of the photon is constant along the null geodesic. The four-velocity of the test particle is,
\begin{equation}\label{4vel}
u= (\dot{t},\dot{\rho},0,0)= \left(\frac{-E}{g_{tt}},  \sqrt{-1-\frac{E^{2}}{g_{tt}}},  0,0\right)
\end{equation}

\noindent Combining Eqs.\eqref{doppler}, \eqref{photon}, \eqref{4vel}, gives,
\begin{equation}\label{shift}
\frac{\nu_{R}}{\nu_{E}}= \frac{-g_{tt}}{E\left[1+ \sqrt{1+\frac{g_{tt}}{E^{2}}}\right]},
\end{equation}

\noindent where $g_{tt}$ is evaluated at the point of emission of the photon at the location of the receding test particle.  The spectroscopic relative velocity, as defined in \cite{bolos}, may be computed for the case of a particle receding from the origin, directly from Eq.(12) of \cite{bolos} and Eq.\eqref{shift} above.
\begin{equation}
\label{vspec}
\|v_{\text{spec}}\|=\frac{(\nu_{E}/\nu_{R})^{2}-1}{(\nu_{E}/\nu_{R})^{2}+1}\,.
\end{equation}

\noindent \textbf{5. Astrometric Relative Velocity}\\

\noindent A photon with unit energy receding radially in the past-pointing horismos (i.e. backward light cone) of the observer at the center of the fluid, with spacetime path $\sigma(t)=(t,0,0,0)$, has four-momentum (as a four-vector) given by, 

\begin{equation}\label{photon2}
p=(\dot{t}, \dot{\rho},0,0)=\left(\frac{1}{g_{tt}},\frac{1}{\sqrt{-g_{tt}}},0,0\right),
\end{equation}

\noindent where the overdot represents differentiation with respect to an affine parameter $\lambda$.   Let $N(\lambda)$ be the null, past-pointing, geodesic with $N(0)=\sigma(t)$ and tangent vector given by Eq.\eqref{photon2} so that $dN/d\lambda (0)=p(0)=(-1,1,0,0)$ and $N(\lambda)=\exp_{\sigma(t)}(\lambda\, p(0))$.\\

\noindent The (past-pointing) photon departing from the observer $\sigma(t)$ at time $t$ will intersect the worldline of the receding test particle determined by Eqs.\eqref{energy} and \eqref{velocity} at a spacetime point $(t^{*}, \rho^{*},0,0)$, where $t^{*}$ is a unique time in the past of the observer $\sigma(t)$, and $\rho^{*}\equiv \rho(t^{*})$.\\

\noindent The affine distance $d^{\text{aff}}$ from the observer $\sigma(t)$ to the spacetime point $(t^{*}, \rho^{*},0,0)$ is defined as the norm of the projection of $\exp_{\sigma(t)}^{-1}[(t^{*},\rho^{*},0,0)]$ onto the orthogonal complement $\sigma'(t)^{\bot}$ of $\sigma'(t)$.  The astrometric speed for the radially recessing test particle is $d(d^{\text{aff}})/dt$.  To compute this we use the easily verified fact that  $N(d^{\text{aff}}) = (t^{*},\rho^{*},0,0)$ (see Eq.(16) and Propositions 6 and 7 of \cite{bolos}).\\

\noindent To see that $d^{\text{aff}} = \rho^{*}$, let $t(\rho)$ be the inverse function of $\rho(t)$ and observe that,
\begin{equation}\label{t}
t(\rho^{*})=t^{*} =t + \int^{d^{\text{aff}}}_{0}\frac{dt}{d\lambda}d\lambda = t(d^{\text{aff}}) 
\end{equation}

\noindent Thus, since $t(\rho)$ is one-to-one, $d^{\text{aff}} = \rho^{*}$. From Eq.\eqref{photon2} it follows that,
\begin{equation}\label{photon3}
\frac{dt}{d\rho}= \frac{-1}{\sqrt{-g_{tt}}}<0.
\end{equation}

\noindent Now, using Eq.\eqref{photon3} we find,

\begin{equation}\label{t*}
t=t^{*}+\int_{\rho^{*}}^{0}\frac{dt}{d\rho}d\rho=t^{*}+\int^{\rho(t^{*})}_{0}\frac{1}{\sqrt{-g_{tt}}}d\rho,
\end{equation}

\noindent which determines $t$ as a function of $t^{*}$ and therefore determines the inverse function, $t^{*}(t)$ as well. Using the chain rule and Eq.\eqref{t*}, it follows that the astrometric relative velocity $v_{\text{ast}}$ is given by,

\begin{equation}\label{ast}
\|v_{\text{ast}}\|=\frac{d\rho(t^{*}(t))}{dt}= \rho'(t^{*})\frac{dt^{*}}{dt}=\frac{\rho'(t^{*})}{1+\frac{\rho'(t^{*})}{\sqrt{-g_{tt}(\rho^{*})}}},
\end{equation}

\noindent with motion in the radial direction. The astrometric relative velocity may be computed for a given value of $t$ by first using Eq.\eqref{t*} to determine $t^{*}$ numerically and then Eq.\eqref{ast}. Since $g_{tt} \rightarrow 0$ at the cosmological horizon, the astrometric relative velocity is asymptotically zero for high energy test particles. \\

\begin{remark}\label{astro} For a test particle approaching, rather than receding from, the central fluid radially, the right side of Eq.\eqref{ast} is changed by a factor of $-1$.  In that case, the astrometric speed can exceed $1$, as in the case for Minkowski space, illustrated in \cite{bolos}.
\end{remark}

\noindent \textbf{{\normalsize 6. Functional relationships between the relative velocities}}\\

\noindent In this section we identify some functional relationships between the four relative velocities. The Fermi and kinematic relative velocities are closely related. Observe that by Eqs.\eqref{velocity0} and \eqref{vk6}, 

\begin{equation}\label{compare5}
\|v_{\text{Fermi}}\|=\sqrt{-g_{tt}}\, \|v_{\text{kin}}\|.
\end{equation}

\noindent The relationship between astrometric and Fermi velocities follows directly from Eq.\eqref{ast},

\begin{equation}\label{compare4}
\|v_{\text{ast}}(t)\|=\frac{\|v_{\text{Fermi}}(t^{*})\|}{1+\frac{\|v_{\text{Fermi}}(t^{*})\|}{\sqrt{-g_{tt}(\rho(t^{*}))}}},
\end{equation}

\noindent where the left side of Eq.\eqref{compare4} is evaluated on the worldline of the central Fermi observer at $\sigma(t)=(t,0,0,0)$, and on the right side, the Fermi velocity is evaluated at the spacetime point $(t^{*}, \rho(t^{*}),0,0)$ in the past light cone of the Fermi observer.  The functional relationship between $t$ and $t^{*}$ is given by Eq.\eqref{t*}.  Combining Eqs.\eqref{compare5} and \eqref{compare4} yields,

\begin{equation}\label{compare6}
\|v_{\text{ast}}(t)\|=\frac{\sqrt{-g_{tt}(\rho(t^{*}))}\,\|v_{\text{kin}}(t^{*})\|}{1+\|v_{\text{kin}}(t^{*})\|},
\end{equation}

\noindent so that a present measurement of the astrometric velocity is determined by the kinematic velocity at a spacetime point in the past lightcone.\\

\noindent From \eqref{vk6}, and \eqref{shift}, we also have,

\begin{equation}\label{compare1}
\|v_{\text{kin}}\| =- \left(1+\frac{g_{tt}}{E}\frac{\nu_{E}}{\nu_{R}}\right),
\end{equation}

\noindent where $g_{tt}$ is evaluated at the location of the test particle and emission of a photon. As in the preceding case, some care is required in the interpretation of the terms in Eq.\eqref{compare1} as functions of time (as opposed to radial distance $\rho$).  This is because $v_{\text{kin}}$ is the relative velocity at the time of emission of the photon, which is received and whose frequency, $\nu_{R}$, is measured by the central observer only at a later time.\\  

\noindent From Eq.\eqref{vspec}, it follows that,

\begin{equation}\label{compare2}
\left(\frac{\nu_{E}}{\nu_{R}}\right)^{2}=\frac{1+\|v_{\text{spec}}\|}{1-\|v_{\text{spec}}\|}.
\end{equation}

\noindent Combining this with Eq. \eqref{compare1} yields an expression for $\|v_{\text{kin}}\|$ in terms of $\|v_{\text{spec}}\|$,

\begin{equation}\label{compare3}
\|v_{\text{kin}}(t^{*})\|=-\frac{g_{tt}}{E}\sqrt{\frac{1+\|v_{\text{spec}}(t)\|}{1-\|v_{\text{spec}}(t)\|}}-1,
\end{equation}

\noindent where as above, the time of evaluation of the right side is in the future of the time of evaluation of the left side.\\

\noindent Observe now that by combining Eqs.\eqref{compare3} and \eqref{compare6}, the astrometric velocity may be expressed directly in terms of the spectroscopic velocity as, 

\begin{equation}\label{compare7}
\|v_{\text{ast}}(t)\|=\sqrt{-g_{tt}(\rho^{*})}\,-\frac{E}{\sqrt{-g_{tt}(\rho^{*})}}\sqrt{\frac{1-\|v_{\text{spec}}(t)\|}{1+\|v_{\text{spec}}(t)\|}}
\end{equation}

\noindent where $\rho^{*}=\rho(t^{*})$ is the affine distance as in Sec. 5, i.e., the proper distance observed at the time of sighting.  Eq.\eqref{compare7} together with Eq.\eqref{vspec} provides a way to compare, in principle, spectroscopic and parallax measurements for radially receding particles.\\

\noindent \textbf{{\normalsize 7. Hubble's Law}}\\

\noindent In this section, we derive two versions of Hubble's law for Schwarzschild-de Sitter space, with $\Lambda> 0$, using linear approximation of the dependence of $\|v_{\text{kin}}\|$ on proper distance. For the energy of the receding test particle, we take $E=E_{0}$, given by Eq.\eqref{rest}. This is physically natural because $E_{0}$ is the minimum energy of a test particle that does not fall back into the central fluid. Recall from Sec. 1 that a particle with critical energy $E_{0}$ remains at rest at a point with radial coordinate $r_{0}$, but starting at any position with radial coordinate $r> r_{0}$ it will recede from the central observer.  The radial velocity of such a test particle is due solely to the cosmological constant, and not what might be described as an initial ``peculiar'' velocity.\\

\noindent From Eqs. \eqref{rest} and \eqref{vk6} with $E=E_{0}$,

\begin{equation}\label{hubble1}
\|v_{\text{kin}}(\rho)\|^{2}=1-\frac{1-\frac{2M}{r(\rho)}-\frac{\Lambda r(\rho)^{2}}{3}}{1-\frac{2M}{r_{0}}-\frac{\Lambda r_{0}^{2}}{3}}.
\end{equation}

\noindent Expanding Eq.\eqref{hubble1} in a Taylor series centered at $\rho_{0}=\rho(r_{0})$ (so that $r(\rho_{0})=r_{0}$) gives,

\begin{equation}\label{hubble2}
\|v_{\text{kin}}(\rho)\|^{2}\approx\|v_{\text{kin}}(\rho_{0})\|^{2}+\left(\frac{2M}{r_{0}^{3}}+\frac{\Lambda}{3}\right)(\rho-\rho_{0})^{2}.
\end{equation}

\noindent By Eq.\eqref{r0} and the fact that $\|v_{\text{kin}}(\rho_{0})\|=0$ when $E=E_{0}$, we have,

\begin{equation}\label{hubble3}
\|v_{\text{kin}}(\rho)\|\approx\sqrt{\Lambda}(\rho-\rho_{0}), 
\end{equation}

\noindent valid for a distance $\rho$ close to $\rho_{0}$, the balance point between gravitational attraction and repulsion due to the cosmological constant (for the qualitative behavior of $\|v_{\text{kin}}\|$ as a function of distance, see Fig. 2A below).  We may thus define a ``Hubble constant'' $H$ for Schwarzschild-de Sitter spacetime by,

\begin{equation}\label{hubble4}
H=\sqrt{\Lambda}.
\end{equation}

\noindent For example, if $\Lambda=10^{-20}\,ly^{-2}$, according to Eq.\eqref{hubble4}, $H=10^{-10}\,ly^{-1}$ which is the same order of magnitude as current measurements of the  Hubble constant ($H_{0}\approx7.2\times10^{-11}\,ly^{-1}\approx 70\, km/s/Mpc$).\\

\noindent For large distances, on the order of magnitude of the distance to the cosmological horizon, which roughly coincides with the Hubble radius when parameters for a galaxy supercluster are taken (see the following section), a linear approximation more accurate than Eq.\eqref{hubble3} is,

\begin{equation}\label{hubble5}
\|v_{\text{kin}}(\rho)\|\approx \frac{1}{\rho_{\text{horizon}}}(\rho-\rho_{0})\approx  \frac{1}{\rho_{\text{horizon}}}\rho,
\end{equation}

\noindent where $\rho_{\text{horizon}}$ is the proper distance from the central observer to the cosmological horizon.  This choice of linear approximation forces $\|v_{\text{kin}}(\rho)\|\rightarrow1$ as $\rho\rightarrow\rho_{\text{horizon}}$.  For the model of a galaxy supercluster considered in the next section, $\rho_{\text{horizon}}=1.57\times10^{10}\,ly$.\\

\noindent To obtain a formula for the redshift of a photon in terms of the affine distance of the emitter, $\rho^{*}$, we first combine Eqs.\eqref{hubble3} and \eqref{compare1} to get,

\begin{equation}\label{hubble6}
- 1-\frac{g_{tt}(\rho^{*})}{E_{0}}\frac{\nu_{E}}{\nu_{R}}\approx\sqrt{\Lambda}(\rho^{*}-\rho_{0}),
\end{equation}

\noindent with the same notation as in the previous section.  Using the fact that the Taylor expansion of $-g_{tt}(\rho^{*})/E_{0}$ about $\rho_{0}$ is given by (see Eqs.\eqref{r0} and \eqref{rest}),

\begin{equation}
\frac{-g_{tt}(\rho^{*})}{E_{0}}= E_{0}+0(\rho^{*}-\rho_{0})+\mathcal{O}(2)= E_{0}+\mathcal{O}(2),
\end{equation}

\noindent and rearranging terms results in,

\begin{equation}\label{hubble7}
\frac{\nu_{E}}{\nu_{R}}\approx\frac{\sqrt{\Lambda}}{\,\,E_{0}}(\rho^{*}-\rho_{0})+\frac{1}{E_{0}}.
\end{equation}

\noindent A physical interpretation of the last term on the right side of Eq.\eqref{hubble7} may be given.  A short calculation using Eqs.\eqref{doppler} and \eqref{photon} shows that,

\begin{equation}\label{hubble8}
\frac{1}{E_{0}}=-\frac{E_{0}}{g_{tt}(\rho_{0})}=\left(\frac{\nu_{E}}{\nu_{R}}\right)_{0},
\end{equation}

\noindent where $z_{0}\equiv\left(\frac{\nu_{E}}{\nu_{R}}\right)_{0}-1$ is the redshift of a photon measured by the central Fermi observer at $\rho=0$ and emitted by a stationary observer with energy $E_{0}$ at a fixed point in space at proper distance $\rho_{0}$ from the central observer (i.e., at a point with radial coordinate $r_{0}$). Thus, denoting the redshift factor, as is customary, by $z= \nu_{E}/\nu_{R}-1$ gives,

\begin{equation}\label{hubble9}
z-z_{0}\approx \frac{\sqrt{\Lambda}}{\,\,E_{0}}\,(\rho^{*}-\rho_{0}).
\end{equation}

\noindent For the parameters used in the following section to model a galaxy supercluster, $E_{0}= 1.00012$ (which by Remark \ref{light} is the maximum Fermi relative speed of a photon) so that the ``Hubble constant'' of Eq.\eqref{hubble9} or Eq.\eqref{hubble7} has the same order of magnitude as in Eq.\eqref{hubble3}.\\

\noindent \textbf{{\normalsize 8. Particles receding from a galaxy supercluster}}\\

\begin{figure}[!h]
\centering
\mbox{\subfigure{\includegraphics[width=6cm]{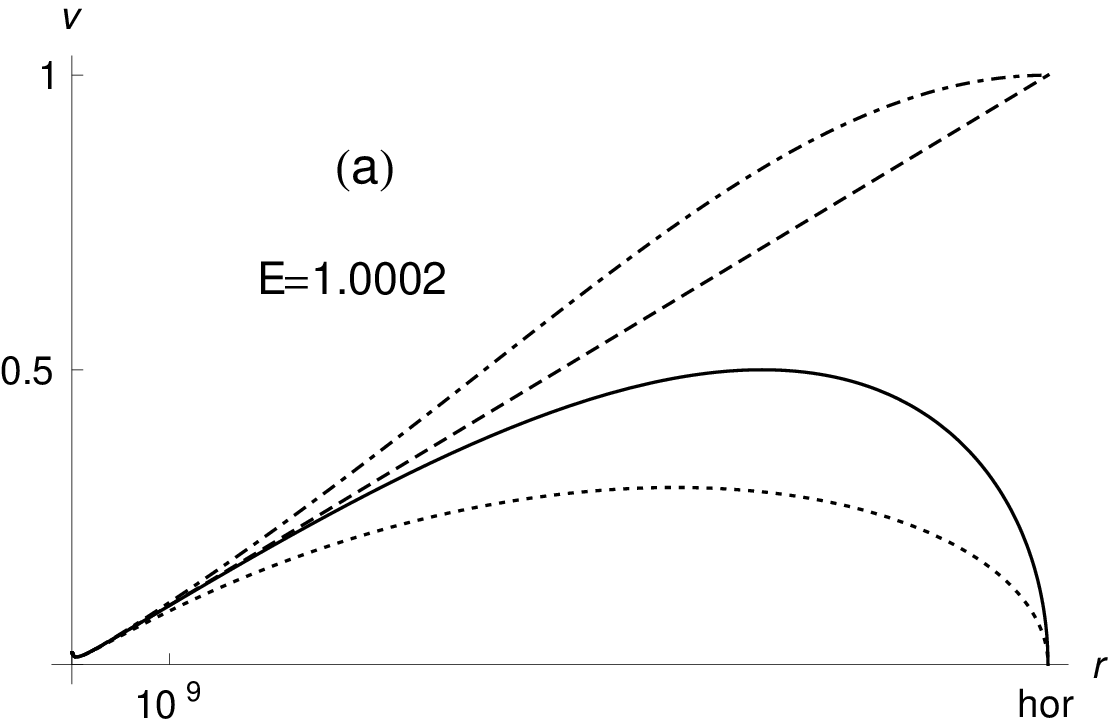}
\subfigure{\includegraphics[width=6cm]{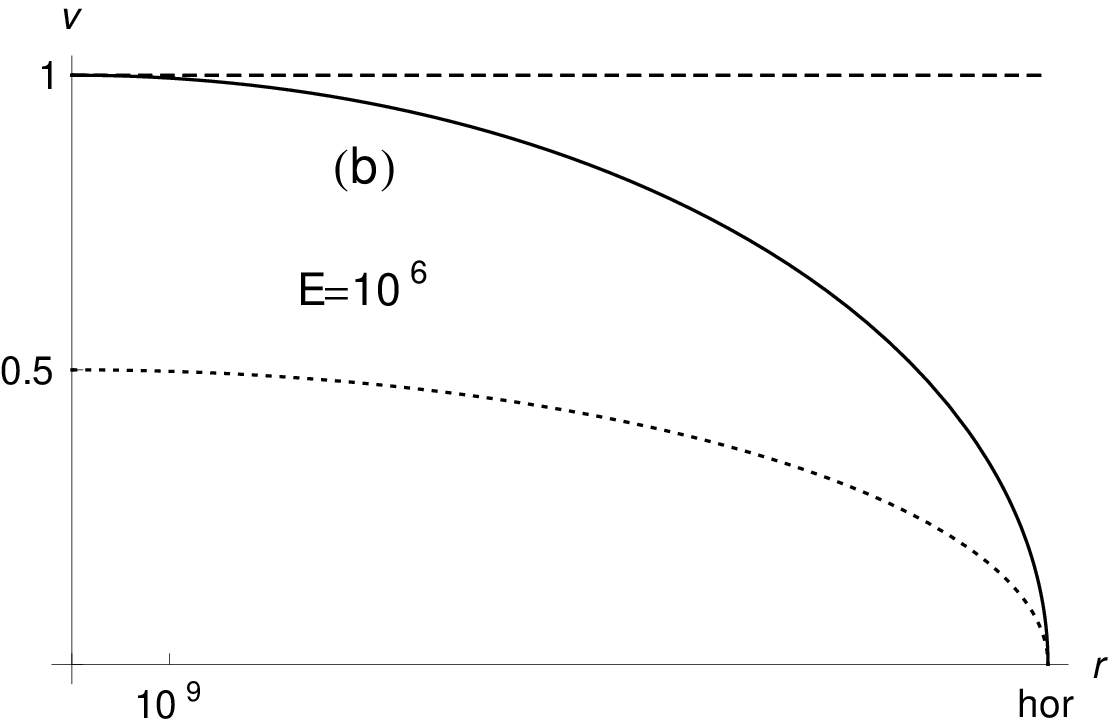} }}}
\caption{Low and high energy behavior of the velocities. $v_{\text{ast}}$ (dotted), $v_{\text{Fermi}}$ (solid), $v_{\text{kin}}$ (dashed), $v_{\text{spec}}$ (dot-dashed). At high $E$, $v_{\text{kin}}\approx v_{\text{spec}}$ (dashed).  $E_{0}\approx 1.00012$.}
\label{fig2}
\end{figure}

\noindent In this section we compare the Fermi, astrometric, kinematic, and the spectroscopic velocities of a radially receding test particle, relative to the observer at the center of the fluid, for specific values of the parameters of Eq.\eqref{metric2}.  We let $M=10^{3}\,ly$, $R = 10^{7}\,ly$, and $\Lambda =3\times 10^{-20}\,ly^{-2}$.  As noted in the introduction, these choices for the parameters are of the same order of magnitude as those for some galaxy superclusters \cite{Wray, Boerner}.  It is then readily deduced that $r_{0}\approx 4.6\times 10^{7}$ ly, and $E_{0}= 1.00012$ (c.f. Eq.\eqref{rest}).  The radial coordinate of the horizon is obtained by solving $g_{tt}(r)=0$ for $r$ and yields $r\equiv hor \approx 10^{10}\,ly$. It then follows from Eq.\eqref{proper} that the proper distance, in this model, from the observer at the center of the fluid to the cosmological horizon, is roughly $1.57\times10^{10}\,ly$, the same order of magnitude as estimates for the present Hubble length.\\

\noindent Figs. \ref{fig2}(a) and \ref{fig2}(b) give graphical comparisons of the Fermi, astrometric, kinematic velocity, and spectroscopic velocities for a receding test particle relative to the central observer at $\rho=0$, at low and high energies. Fig. \ref{fig2}(a) shows how the kinematic and spectroscopic relative velocities reach the speed of light at the horizon, while the Fermi and astrometric relative velocities decrease to zero at the horizon.  Notice in particular the nearly linear behavior of $\|v_{\text{kin}}\|$ with respect to $r$ (which is the case also with respect to $\rho$), consistent with the ``Hubble law'' given in the previous section through linear approximation.\\

\noindent In Fig. \ref{fig2}(b), the kinematic and spectroscopic relative velocities nearly coincide and are nearly equal to the speed of light. The Fermi velocity of a particle of unit rest mass with sufficiently high energy can slightly exceed the speed of light, but by no more than $E_{0}=1.00012$, as follows from Proposition \ref{superluminal}. The qualitative behavior is the same as in Fig. \ref{fig1}.\\

\noindent The use of Eq.\eqref{metric1} to model a galaxy supercluster and its surrounding vacuum has evident shortcomings.  The absence of other gravitational sources, including clusters and superclusters, in the region surrounding the central fluid, as in the actual universe, is a serious limitation of this model that is avoided by FRW cosmologies.  However, FRW cosmologies suffer from a flaw at the opposite extreme.  There are no local vacuums for FRW metrics that model a universe filled with matter.  Instead, in those models, space is filled with a continuum matter fluid that leaves no region of space empty.  This limits the utility of FRW cosmological models to analyze particle motion in the nearly empty space surrounding massive objects, just where the large scale homogeneity of the universe breaks down.\\

 \noindent The model considered here may thus be useful for the analysis of receding masses within a great void adjacent to a supercluster, relatively isolated from gravitational sources other than the supercluster. For example, for receding masses the line of best fit for data pairs of the form $(\rho^{*},\nu_{E}/\nu_{R})$ -- i.e., observed, or affine, distance versus ratios of emission frequency to reception frequency -- determines the slope, $\sqrt{\Lambda}/E_{0}$, and vertical intercept, $(1-\rho_{0}\sqrt{\Lambda})/E_{0}$ in  Eq.\eqref{hubble7}. \\ 
 
\noindent An estimate of the mass and radius of the supercluster determine $E_{0}$ and $\rho_{0}$, which, together with the slope and intercept of the preceding paragraph, lead to an estimate of the cosmological constant, $\Lambda$. Conversely, an estimate of $\Lambda$, together with observationally determined numerical values for the slope and intercept, determine the critical energy, $E_{0}$ and the critical radius $\rho_{0}$ where gravitational attraction and repulsion due to the cosmological constant exactly balance. The kinematic velocity, as a function of proper distance $\rho\gtrsim\rho_{0}$, is then determined by such measurements and Eq.\eqref{hubble3}. Note that the spectroscopic velocity is determined directly from observational data via Eq.\eqref{vspec}.\\

\noindent More generally, a numerical estimate for $\Lambda$, together with observationally determined numerical values for $E_{0}$ and $\rho_{0}$, may be used to calculate $M$ and $R$ through Eqs.\eqref{proper}, \eqref{r0}, and \eqref{rest} and numerical methods such as Newton's method for the determination of roots of a two-component function of the two variables $M$ and $R$.  In this way, the four different relative velocities are determined through direct calculation or through the relationships of Sec. 6.  We note also that the four-velocity of a radially receding mass is uniquely determined by its kinematic relative velocity via Eq.\eqref{vk1}.\\     

\noindent Schwarzschild-de Sitter space, with metric given by Eq.\eqref{metric1}, also serves as a model for the Local Group of galaxies in the far future.  As argued in \cite{krauss, newastronomy}, calculations show that the Local Group will remain gravitationally bound in the face of accelerated Hubble expansion, while more distant structures are driven outside of the cosmological horizon.  The Local Group, decoupled from the Hubble expansion, will be gravitationally bound and surrounded by a vacuum.\\

\noindent We note, in contrast to assertions in \cite{krauss}, that future cosmologists should in principle be able to detect the presence of a cosmological constant, provided they have the means to measure relative velocities of receding test particles, since the formulas calculated in the preceding sections for Fermi, kinematic, spectroscopic, and astrometric relative velocities all depend on $\Lambda$. The qualitative behavior of the relative velocities does not depend on special choices of the parameters. However, the cases of $\Lambda = 0$ and $\Lambda >0$ yield significantly different qualitative behaviors of the trajectories of outbound test particles.\\

\noindent \textbf{{\normalsize 9. Concluding Remarks}}\\

\noindent  Using global Fermi coordinates, we have calculated four geometrically defined velocities of radially receding test particles, relative to the central observer in Schwarzschild-de Sitter space:  Fermi, kinematic, spectroscopic, and astrometric relative velocities. The critical energy $E_{0}$, defined by Eq.\eqref{rest}, is a key parameter and plays multiple roles in this spacetime. It determines the redshift of a light signal received by the central observer and emitted from a static test particle at a point in space with radial coordinate $r_{0}$, where inward gravitational acceleration exactly balances the outward acceleration due to the cosmological constant (Sec. 7). In geometric units, $E_{0}$ is the maximum Fermi speed of a photon relative to the central observer (see Remark \ref{light}).  Receding test particles with energies in excess of $E_{0}$ may be regarded as having ``peculiar velocities'' while particles with energy $E_{0}$ obey a version of Hubble's law in the form of Eqs.\eqref{hubble3}, \eqref{hubble7}, and \eqref{hubble9}. The critical energy together with the cosmological constant, $\Lambda$, determines the redshift of light signals from receding masses, in general, as given by Eq.\eqref{hubble7}.\\

\noindent The Fermi relative velocity of a radially receding unit mass test particle, whose energy lies between energy $E_{0}$ and $\sqrt{2}\,E_{0}$, decreases under the influence of gravity near the central fluid, but far from the fluid (for $r>r_{0}$) the particle accelerates toward the cosmological horizon because of the influence of the cosmological constant.  Within this energy range, the qualitative behavior of the trajectory is consistent with  Newtonian mechanics.\\

\noindent  However, the trajectory of a receding test particle whose energy exceeds $\sqrt{2}\,E_{0}$ is more surprising.  The behavior of the Fermi relative velocity is essentially opposite to its Newtonian counterpart. As shown in Fig.\ref{fig1}, the high energy particle accelerates away from the central mass in the region dominated by gravity, surpassing the speed of light (by Proposition \ref{superluminal}), and then decelerates in the region dominated by the cosmological constant (where relative Fermi velocity of the low energy particle increases).  The effects of the gravitational field and the cosmological constant are reversed in this situation.\\

\noindent A similar, though not entirely analogous, phenomenon occurs in FRW matter (i.e. dust) dominated cosmologies. It was shown in \cite{gron} that, in an expanding universe, a test particle initially at rest relative to a distant observer accelerates toward the observer, according to that observer's proper time and distance measurements, i.e., in Fermi coordinates.\\

\noindent Other comparisons with FRW cosmologies can be made. Outside of the Hubble sphere in FRW cosmologies, the Fermi velocities of receding test particles, relative to the observer at the center of the sphere, exceed the speed of light (cf. Eq. (22) of \cite{gron}). In the model of a galaxy supercluster with surrounding vacuum considered in Sec. 8, the proper distance from the central observer to the cosmological horizon is of the same order of magnitude as the Hubble radius.  In contrast to that model, the Fermi velocity decreases to zero because $\partial/\partial t$ becomes null at the cosmological horizon, but the spectroscopic and kinematic velocities increase asymptotically to the speed of light at that distance (as shown in Fig.\ref{fig2}).  The same phenomena occur for the Local Group of galaxies surrounded by the vacuum that results from the Hubble flow, far into the future. \\

\noindent Hubble's law and the existence of superluminal relative velocities in FRW spacetimes have been used to support the interpretation that, in an expanding universe, galaxy clusters and superclusters are not merely flying apart from each other, space itself is expanding, e.g., \cite{confusion, gron}. But if a Hubble's law or the existence of superluminal Fermi relative velocities characterizes the expansion of space, then we have shown that space expands in the models considered here.  That seems implausible. Proposition \ref{superluminal2} shows that even in the static Schwarzschild spacetime, for which $\Lambda=0$, superluminal relative Fermi velocities necessarily exist.  In that case, the local mass distribution, represented by $A(0)$, in the vicinity of the observer determines the maximum possible Fermi relative velocity of a receding test particle.  In the case that $\Lambda >0$, the maximum relative Fermi velocity of a receding particle is determined by the critical energy, $E_{0}$.\\

\end{document}